\titleformat*{\subsection}{\bfseries}
\titleformat*{\subsubsection}{\bfseries}
\theoremstyle{plain}
\newtheorem{theorem}{Theorem}[section]
\newtheorem{lemma}[theorem]{Lemma}
\newcommand{\rP}{\mbox{P}}
\newcommand{\THWD}{T_{\text{HWD}}}
\newcommand{\WHWD}{W_{\text{HWD}}}
\newcommand{\TCATT}{T_{\text{CATT}}}
\newcommand{\W}{W_{\hat\pi}}
\newcommand{\Wpi}{W_\pi}
\newcommand{\Wd}{W_{\delta}}
\newcommand{\Wde}[1]{W_{\delta=#1}}
\newcommand{\T}{T}
\newcommand{\V}{T_{\chi^2}}
\newcommand{\hqpione}{\hat{q}_{1,\hat\pi}}
\newcommand{\hqpitwo}{\hat{q}_{2,\hat\pi}}
\newcommand{\hqpone}{\hat{q}_{1,p}}
\newcommand{\hqptwo}{\hat{q}_{2,p}}
\newcommand{\hqhpii}{\hat{q}_{i,\hat\pi}}
\newcommand{\hqone}{\hat{q}_{1}}
\newcommand{\hqtwo}{\hat{q}_{2}}
\newcommand{\hqi}{\hat{q}_{i}}
\newcommand{\hqhpioneone}{\hat{q}_{11,\hat\pi}}
\newcommand{\Zcatt}{T_{\text{CATT}}}
\definecolor{darkgreen}{rgb}{0,0.5,0}
\begin{document}

\title{A powerful MAF-neutral allele-based test\\for case-control association studies}

\author{%
M.~A. Jonker\textsuperscript{a}, J. Pecanka\textsuperscript{b}\\[2em]
\small
\textsuperscript{a}Radboud Institute for Health Sciences, Radboud University Medical Center, Nijmegen, Netherlands\\
\small
\textsuperscript{b}Department of Biomedical Data Sciences, Leiden University Medical Center, Leiden, Netherlands\\[3em]
\small
Corresponding author: M.~A. Jonker, marianne.jonker@radboudumc.nl
}
\date{December 1, 2018}

\maketitle

\begin{abstract}
{In a case-control study aimed at locating autosomal disease variants for a disease of interest, association between markers and the disease status is often tested by comparing the marker minor allele frequencies (MAFs) between cases and controls. For most common allele-based tests the statistical power is highly dependent on the actual values of these MAFs, where associated markers with low MAFs have less power to be detected compared to associated markers with high MAFs. Therefore, the popular strategy of selecting markers for follow-up studies based primarily on their p-values is likely to preferentially select markers with high MAFs.
We propose a new test which does not favor markers with high MAFs and improves the power for markers with low to moderate MAFs without sacrificing performance for markers with high MAFs and is therefore superior to most existing tests in this regard. An explicit formula for the asymptotic power function of the proposed test is derived theoretically, which allows for fast and easy computation of the corresponding p-values. The performance of the proposed test is compared with several existing tests both in the asymptotic and the finite sample size settings.}
\end{abstract}

\begin{keywords}
Case-control study, efficient allele-based test, linkage disequilibrium (LD), powerful test, p-values, minor allele frequency (MAF)
\end{keywords}

\newpage

\section{Introduction}

When locating dichotomous trait loci (such as disease variants) at autosomal chromosomes, association studies of genetic markers are typically conducted using the case-control study design. Over the years, a fair number of genetic association tests suitable for such studies have been proposed \cite{Balding06,Zheng12}. For autosomal markers the native test would be based on genotypic information, however, tests contrasting the observed marker allele frequencies in the samples of cases and controls are often preferentially used due to their beneficial properties such as an ability to reliably recover signals even under deviations from additivity of allelic effects (e.g.\ under a dominance or recessive model). Among the existing tests of this type, probably the best known example is the binomial test of equality of allele frequencies in the samples of cases and controls, henceforth called the \emph{allele-based test} (ABT). Other popular alternatives are the chi-square test for association, the Fisher exact test, the logistic regression model (LRM) score test, and the Cochran-Armitage trend test (CATT) \cite{Balding06,Sasieni97,Zheng12}. The last of these has the advantage of being applicable even when the assumption of Hardy-Weinberg equilibrium is violated, while the score test stands out due to its abilities to adjust for potential confounders and to model multiple markers (including interactions) simultaneously.

By definition, a marker is associated with a disease, or more generally with a dichotomous trait, if it is in linkage disequilibrium (LD) with one of its causal genetic variants \cite{Zheng12}. For most existing tests, including those mentioned above, the power to detect a marker is highly dependent on the degree of LD between the marker and the causal variant. Typically, the stronger the LD the smaller the p-value of the test. However, the p-values also depend on the marker allele frequencies; among markers that are in LD with the same causal variant, markers with high minor allele frequencies (MAFs) are typically much more likely to be detected than markers with low MAFs. Consequently, the strategy of selecting individual markers for follow-up studies primarily using the p-values from the existing tests is biased towards selecting markers with high MAFs. The same holds for most alternative strategies for prioritizing markers for follow-up studies that have been proposed in the literature such as ranking markers using the Bayes factor \cite{Wakefield08,Wakefield09}, the likelihood ratio signal \cite{Stromberg09}, the frequentist factor \cite{Wacholder04}, or PrPES \cite{Stromberg08} as the signal measures. A comparison between these strategies and the strategy of ranking markers using the p-values of various allele-based tests and the CATT found that all of the considered strategies resulted in highly similar ordering of markers and the markers with the smallest p-values obtained from the ABT tended to be top-ranked by the other methods as well, and vice-versa \cite{Stromberg09}. In fact, some of the alternative strategies exhibited a tendency to disfavor markers with small MAFs to an even stronger degree than the ABT p-value based ranking.

In this paper we propose a novel test which can be viewed as an adjustment of the standard ABT for testing association in case-control studies, which reduces the preferential treatment of markers with high MAFs. We show that the new test has equivalent or superior power compared to the commonly used tests, and the power superiority occurs particularly in situations with low to moderate marker MAFs. We also show how the new test can be made robust against deviations from Hardy-Weinberg equilibrium, an important practical concern. We derive an explicit formula for the test's asymptotic power function, thus allowing for fast and easy computation of the test's p-values. A comparison is made with the (asymptotic) power function for the standard ABT, the CATT, the chi-square test and the LRM score test in the absence of confounders. In addition to the asymptotic perspective, we also investigate via simulation the power performance of the new test in a finite sample size setting. Finally, we apply the new test to a major depression disorder case-control data set.

\section{Methods}

\subsection{Setting}

In this paper we define \textsl{causal variant}, or simply \textsl{variant}, to mean a causal genetic locus (e.g.\ SNP) and \textsl{marker} to mean an observed genetic locus, which may or may not be in LD with a causal variant. For the disease of interest there may be multiple causal variants. The goal is to identify the markers that are in LD with any of the causal variants for the given disease of interest. For simplicity of notation, we assume that there is only one causal variant. In Section \ref{sec.Multiple_variants} we briefly discuss the situation with multiple causal variants.

The case-control status with respect to a given disease of interest for a random individual from a specific population is denoted by $A$ whenever the individual is a case (i.e.\ is affected by the disease, thus also called \textsl{unaffected}) and by $U$ for a control (also called \textsl{unaffected}). Furthermore, the fraction of the cases in the total population is denoted by $\pi$.

Suppose the causal variant is biallelic with alleles $A_1$ and $A_2$. Denote the corresponding allele frequencies in the total population as $p_1$ and $p_2=1-p_1$ and the frequencies of $A_i$ only among the controls (unaffected) and cases (affected) as $p_i^U$ and $p_i^A$, respectively. Note that, trivially, it holds $p_1^U+p_2^U=p_1^A+p_2^A=1$. If a variant exhibits more than two alleles, it can still be treated as biallelic by re-defining one of the alleles, say $A_2$, to denote any allele that is not $A_1$. Further denote the fraction of the cases among the individuals with genotype $(A_i,A_j)$ at the causal variant by $\pi_{ij}$. Since genotypes are non-ordered, it is assumed that $\pi_{12}=\pi_{21}$. Further it is assumed that all markers are also biallelic. For a given marker, the two alleles are denoted by $M_1$ and $M_2$ with $q_1$ and $q_2$ the corresponding frequencies in the total population. Similarly denote the frequencies of $M_1$ and $M_2$ only among the controls (``unaffected'') as $q_1^U$ and $q_2^U$ and only among the cases (``affected'') as $q_1^A$ and $q_2^A$, respectively. Note that, again trivially, it holds $q_1^U+q_2^U=q_1^A+q_2^A=1$.

In the following sections until Section \ref{sec.AdjustmentsHWD} it is assumed that the marker alleles are in Hardy-Weinberg equilibrium (HWE). In Section \ref{sec.AdjustmentsHWD} we present adjustments of our test (defined in (\ref{eqn.hqhpii}) below) aimed at situations where the assumption of HWE is violated. Additionally, throughout the paper it is assumed that genotyping errors can be neglected and that the samples of cases and controls are random and independent selections from the cases and controls in the given population of interest.

The total sample consists of $N$ independent individuals of which there are $R$ cases and $S$ controls, where $R$ and $S$ are assumed to be fixed and non-random. In other words, for biallelic markers we observe a total of $2R$ and $2S$ alleles for the cases and the controls, respectively. Let $R_0$ and $R_1$ denote the observed counts of genotypes $(M_1,M_1)$ and $(M_1,M_2)$ among the cases, respectively, and let $S_0$ and $S_1$ denote the corresponding genotype counts among the controls. We then estimate the frequencies of the allele $M_1$ among the cases and the control by $\hqone^A=\tfrac{1}{2}(2R_0+R_1)/R$ and $\hqone^U=\tfrac{1}{2}(2S_0+S_1)/S$, respectively. We denote the estimates of the complementary frequencies as $\hqtwo^A=1-\hqone^A$ and $\hqtwo^U=1-\hqone^U$.

\subsection{Novel test statistic}

Inspired by the binomial test of equality of allele frequencies of cases and controls (ABT) for testing H$_0:q_1^U=q_1^A$ versus H$_1:q_1^U\neq{}q_1^A$, we propose a novel test statistic for testing H$_0$ against H$_1$. We define the statistic as
\begin{align}
\W\;=\;\frac{\sqrt{m}(\hqone^U - \hqone^A)}{\sqrt{\hqpione\hqpitwo}},
\label{eqn.Wdelta}
\end{align}
where $m=2N\lambda(1-\lambda)$ with $\lambda=R/N$ (i.e. the fraction of the cases in the sample) and
\begin{align}
\label{eqn.hqhpii}
\hqhpii\;=\;\hat\pi\hqi^A+(1-\hat\pi)\hqi^U,\qquad i=1,2,
\end{align}
with $\hat\pi$ denoting an estimate of the disease prevalence $\pi$. This latter estimate cannot be obtained from the case-control data and thus additional external information is required for the estimation. For many diseases suitable estimates of the population prevalence are readily available from literature or other sources such as national registries (see also Section \ref{sec.Robustness}).

Assuming that $\hat\pi$ is an asymptotically consistent estimator of $\pi$, the denominator in $\W$ converges in probability to $\sqrt{q_1q_2}$ as the number of observations in the case-control sample and used for estimating $\pi$ increase to infinity. Consequently, by Slutsky's lemma and the central limit theorem, it follows that $\W$ is asymptotically standard normally distributed under the null hypothesis H$_0$. In other words, rejecting H$_0$ whenever $|W_{\hat\pi}|>\xi_{\alpha/2}$, where $\xi_{\alpha/2}$ is the upper $\alpha/2$-quantile of the standard normal distribution, yields a test of H$_0$ against H$_1$ with an asymptotic level of significance of $\alpha$.

\subsubsection*{Motivation}

The motivation for the new statistics comes from the following equality, which we derive in Appendix A,
that reads
\begin{align}
\frac{q_{1}^U-q_{1}^A}{\sqrt{q_1q_2}}\;=\;\Delta\;\frac{p_{1}^U-p_{1}^A}{\sqrt{p_1p_2}},
\label{eqn.pMLUDelta}
\end{align}
where $\Delta$ is a common measure for the degree of LD between a marker and a causal variant, defined as $\Delta := D_{11}/\sqrt{p_1p_2q_1q_2}$, where $D_{ij}=\rP(A_iM_j)-p_iq_j$ and $\rP(A_iM_j)$ denotes the frequency of the joint haplotype $(A_i,M_j)$ at the causal variant and the marker in the total population \cite{Devlin95}. The equality (\ref{eqn.pMLUDelta}) shows how the relative difference between the allele frequencies among the cases and the controls at the causal variant (the quotient on the right-hand side of (\ref{eqn.pMLUDelta})) is passed on to the neighboring markers through the multiplication by $\Delta$. An immediate consequence of (\ref{eqn.pMLUDelta}) is this. If the marker allele frequencies among the controls and among the cases are unequal ($q_1^U-q_1^A\neq 0$), then $\Delta$ must be non-zero, and vice versa \cite{Kruglyak99,Pritchard01}. Since the goal of an association analysis is to find markers for which $\Delta\neq0$, it follows that testing the null hypothesis H$_0:\Delta=0$ against the alternative hypothesis H$_1:\Delta\neq0$ is equivalent to testing H$_0:q_1^U=q_1^A$ against H$_1:q_1^U\neq{}q_1^A$. Since typically only marker data is available, the equation (\ref{eqn.pMLUDelta}) naturally suggests to use a test statistic that is of the form of the left-hand side of (\ref{eqn.pMLUDelta}). Hence the new statistic $\W$.

\subsection{Asymptotic power functions: A comparison}

In this section we present an (asymptotic) power comparison of $\W$ and several commonly used tests of equality of allele frequencies as well as the classical chi-square test statistic denoted as $\V$ \cite{Sasieni97}. A commonly used frequency-based tests utilize the statistic $\T$ defined as
\begin{align*}
\T&\;=\;\frac{\sqrt{m}(\hqone^U-\hqone^A)}
{\sqrt{\lambda\hqone^U\hqtwo^U+(1-\lambda)\hqone^A\hqtwo^A}}.
\end{align*}
Under the null hypothesis of no association $\T$ is asymptotically standard normally distributed.

In addition to $\T$, two other tests of association are popularly used. Namely the Cochran-Armitage trend test \cite{Zheng12}, for which we denote the statistic by $\Zcatt{}$, and the LRM score test where the observed minor allele count is the independent variable \cite{Balding06}. Their powers are compared with that of $\W$ using a theoretical argument.

For the sake of brevity, in this paper we only focus on the additive model. However, our investigation (not shown) has indicated that the presented conclusions remain qualitatively true for other genetic models including the dominant and the recessive models as well as other parameter settings.

\subsubsection*{Power comparison between $\W$ and $\T$: theory}

It is easy to see that the statistics $\T$ and $\W$ are closely linked. It is straightforward to show that $\W=\hat{Q}_{\hat\pi}^{-1}\T$, where
\begin{align*}
\hat Q_{\hat\pi}\;=\;\frac{\sqrt{\hqpione\hqpitwo}}
{\sqrt{\lambda\hqone^U\hqtwo^U+(1-\lambda)\hqone^A\hqtwo^A}}.
\end{align*}
Assuming that $\hat\pi$ is asymptotically consistent and that the fraction of the cases $\lambda$ is fixed, $\hat{Q}_{\hat\pi}$ converges in probability to $Q$, where $Q^2=q_1 q_2/(\lambda q_{1}^U q_{2}^U+(1-\lambda)q_{1}^Aq_{2}^A)$, as $m$ and as all of the sample sizes underlying $\hat\pi$ go to infinity. Under the alternative hypothesis it holds $Q\neq1$, thus $\T$ and $\W$ do not have equal power. However, they do have the same level since under the null hypothesis it holds $Q=1$ (since $q_1^U=q_1^A=q_1$). In fact, under the null hypothesis $\hat{Q}_{\hat\pi}$ converges in probability to 1 irrespective of the asymptotic consistency of $\hat\pi$. More specifically, if $\hat\pi$ is replaced in $\hat{Q}_{\hat\pi}$ by any value $\delta\in(0,1)$, the resulting fraction $\hat{Q}_{\delta}$ still converges in probability to 1, meaning that for any $\delta\in(0,1)$ in place of $\hat\pi$ the corresponding test is valid (see Section \ref{sec.Robustness} for further discussion).

Further investigating the link between $\W$ and $\T$, an application of the central limit theorem and Slutsky's lemma yields that for $m$ and the numbers of observations underlying $\hat\pi$ all going to infinity it holds
\begin{align*}
\W\;=\;\frac{\sqrt{m}(\hqone^U - \hqone^A)}{\sqrt{\hqpione\hqpitwo}}
\;=\;\frac{\sqrt{m}(q_{1}^U - q_{1}^A)}{\sqrt{q_1q_2}} + O_p(1)
\;=\;\frac{\sqrt{m} \Delta (p_{1}^U - p_{1}^A)}{\sqrt{p_1p_2}} + O_p(1),
\end{align*}
where $O_p(1)$ denotes a term that is bounded in probability. Note that the last equality follows from (\ref{eqn.pMLUDelta}).
Consequently, with $B=(p_{1}^U - p_{1}^A)/\sqrt{p_1 p_2}$, it holds that
\begin{align}
\W\;=\;\sqrt{m} \Delta B + O_p(1)\qquad \mbox{ and } \qquad
\T\;=\;\sqrt{m} \Delta B Q + O_p(1).
\label{eqn.approx_W_and_T}
\end{align}
In other words, for large $m$ the power functions of the tests based on $\W$ and $\T$ are respectively governed by the terms $\sqrt{m}\Delta{}B$ and $\sqrt{m}\Delta{}B{}Q$. Note that there are three types of quantities at play here. The term $\sqrt{m}$ is sample-specific and is the same for every marker. $\Delta$, on the other hand, expresses the degree of LD between the marker and the causal variant and is therefore marker-specific, and so is the term $Q$. Finally, $B$ is specific to the causal variant and is therefore the same for all markers that are in LD with the same causal variant.

In terms of power, the asymptotic approximations in (\ref{eqn.approx_W_and_T}) show that for each marker the p-values based on $\T$ are weighted by their sample allele frequencies via $Q$, where $Q\neq 1$ under the alternative hypothesis. In the case of $\W$, however, the term $Q$ is absent, which means that there is no frequency-based weighing and thus the corresponding p-values are much more comparable over markers with different allele frequencies, especially if these markers are in LD with the same causal variant and thus have the same value for $B$.

Suppose that the minor allele $M_1$ is positively correlated with the risk allele at the variant. Then, $M_1$ will be enriched among the cases, and thus $q_1^U<q_1<q_1^A$ and $q_2^A<q_2<q_2^U$. If $q_1^A<\tfrac{1}{2}$, then $q_1^Uq_2^U<q_1q_2<q_1^Aq_2^A$, because the function $p \rightarrow p(1-p)$ is concave and symmetric around $\tfrac{1}{2}$.
Recall that $\lambda$ equals the fraction of cases. It holds $Q\leq 1$ if and only if $\lambda\leq \lambda_0$, with $\lambda_0=(q_1^Aq_2^A-q_1q_2)/(q_1^Aq_2^A-q_1^Uq_2^U)$, at which point $\W$ is more powerful than $\T$. In most practical situations $\lambda=\tfrac{1}{2}$ (balanced design) or $\lambda<\tfrac{1}{2}$ (more controls than cases). Although, in practice there might be markers for which $\lambda>\lambda_0$, this will not be common. For $M_1$ negatively correlated with the risk allele at the variant, the power ordering between $\W$ and $\T$ is reversed. However, settings with strong or even mild negative correlations between the minor allele $M_1$ and the minor risk allele at the causal variant are not generally possible.

\subsubsection*{Power comparison of $\W$ with $\T$ and $\V$: numerical results}

In the top row plots in Figure \ref{fig.power_asymptotic} we provide a numerical comparison of $\W$ with $\T$ and $\V$ in terms of their power performances. The asymptotic power functions of $\W$ (continuous line), $\T$ (dashed line) and $\V$ (dotted line) are shown as a function of $q_1$ (left plot) and of $\Delta$ (right plot). The number of cases and controls was put at $R=S=10,000$ and the significance level was set to $\alpha=5\times10^{-8}$. Notice that in both plots the power functions of $\T$ and $\V$ almost completely overlap, which means that the two statistics have almost identical power. Moreover, the power functions of $\T$ and $\V$ lie fully below that of $\W$, which shows $\W$ to be more powerful than both $\T$ and $\V$ in the considered setting. In terms of MAF, $\W$ is the superior performer for a majority of values. Unsurprisingly, the degree of superiority of $\W$ weakens with increasing $q_1$ until the ordering flips for MAF near 0.5, when the statistics $\T$ and $\V$ both become (slightly) more powerful than $\W$.

In the bottom row plots in Figure \ref{fig.power_asymptotic} the power functions for $\W$ (continuous lines) and $\T$ (dashed lines) are given for the same setting as before except here the design is unbalanced with the number of cases and controls set equal to $R=6000$ and $S=16,000$ (left) and to $R=16,000$ and $S=6000$ (right). It shows the power of $\T$ to be dependent on the fraction of the cases in the sample. Clearly, the more unbalanced in favor of the controls the design is the more the corresponding test favors markers with large MAFs.

As discussed in the theoretical part, the power function based on $\W$ is constant as a function of $q_1$, while the power functions of $\T$ and $\V$ increase with $q_1$. These properties drive the behavior of these statistics for various MAFs. It explains why $\T$ and $\V$ both favor markers with large MAFs at the cost of those with smaller MAFs, and why $\W$ does not exhibit such behavior. A direct consequence of these properties is that the p-values based on $\W$ are much more comparable across markers with different MAFs.

Figure \ref{fig.power_robust} (right) shows the power functions for a more prevalent disease. A qualitatively similar behavior has been observed under a number of alternative settings (results not shown).

\begin{figure}
\begin{center}
\includegraphics[scale=0.55]{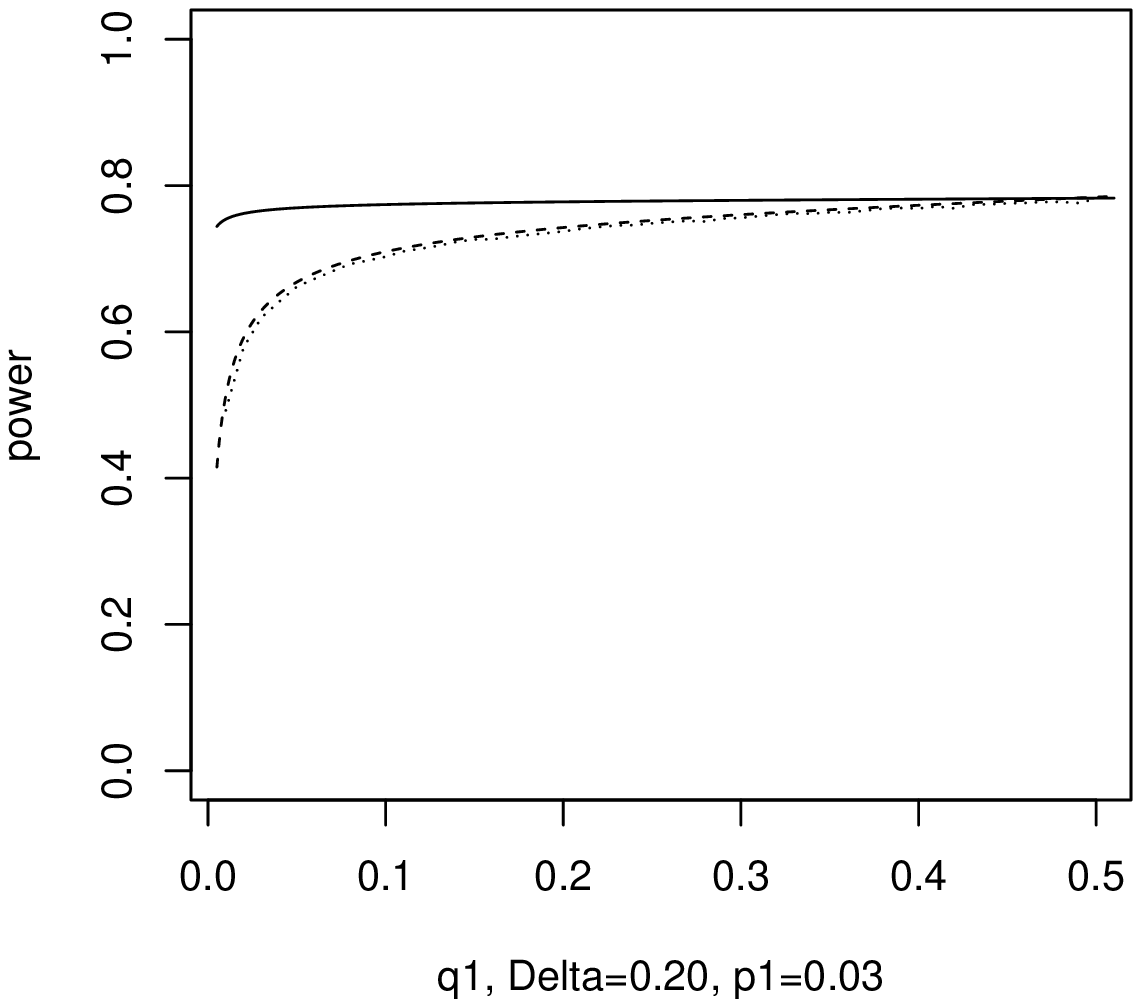}
\includegraphics[scale=0.55]{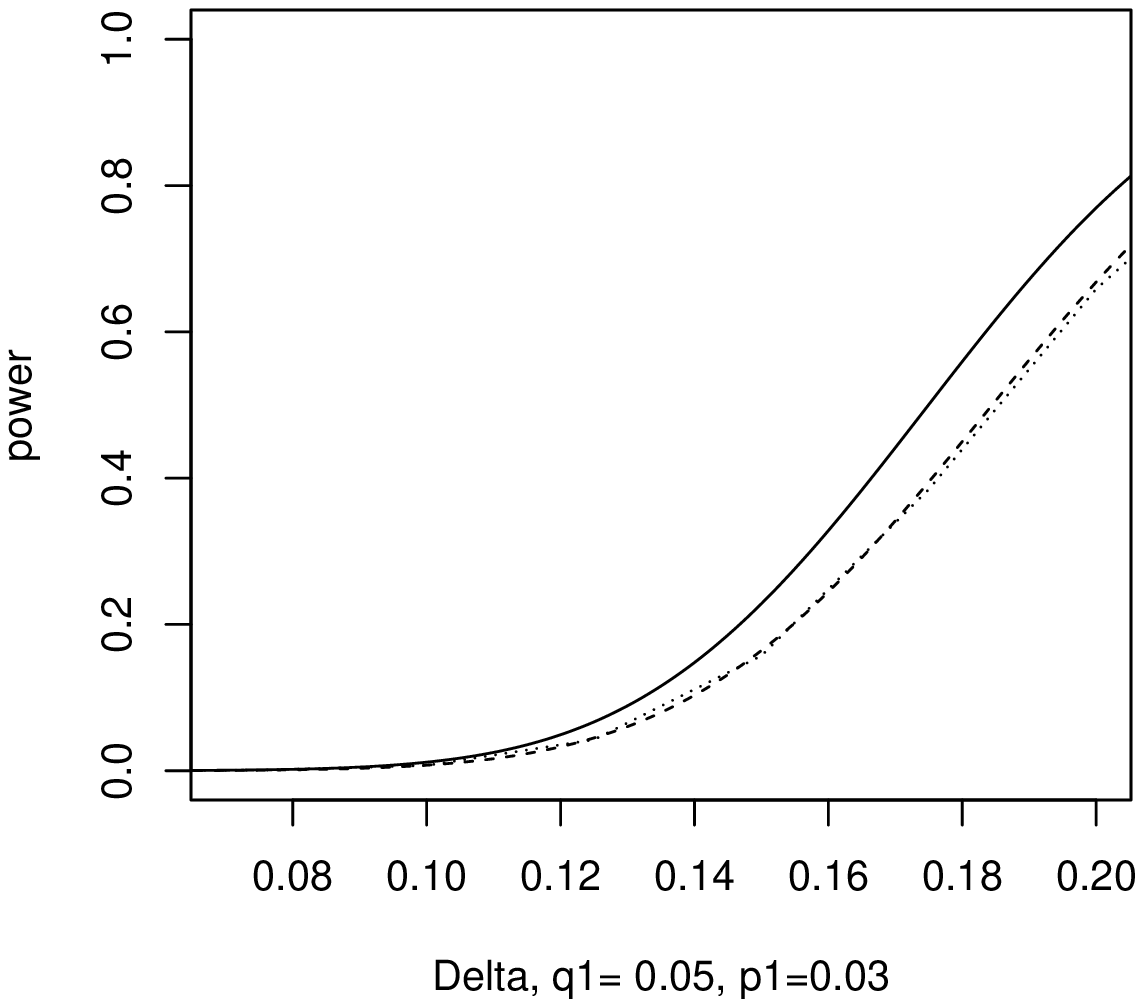}
\includegraphics[scale=0.55]{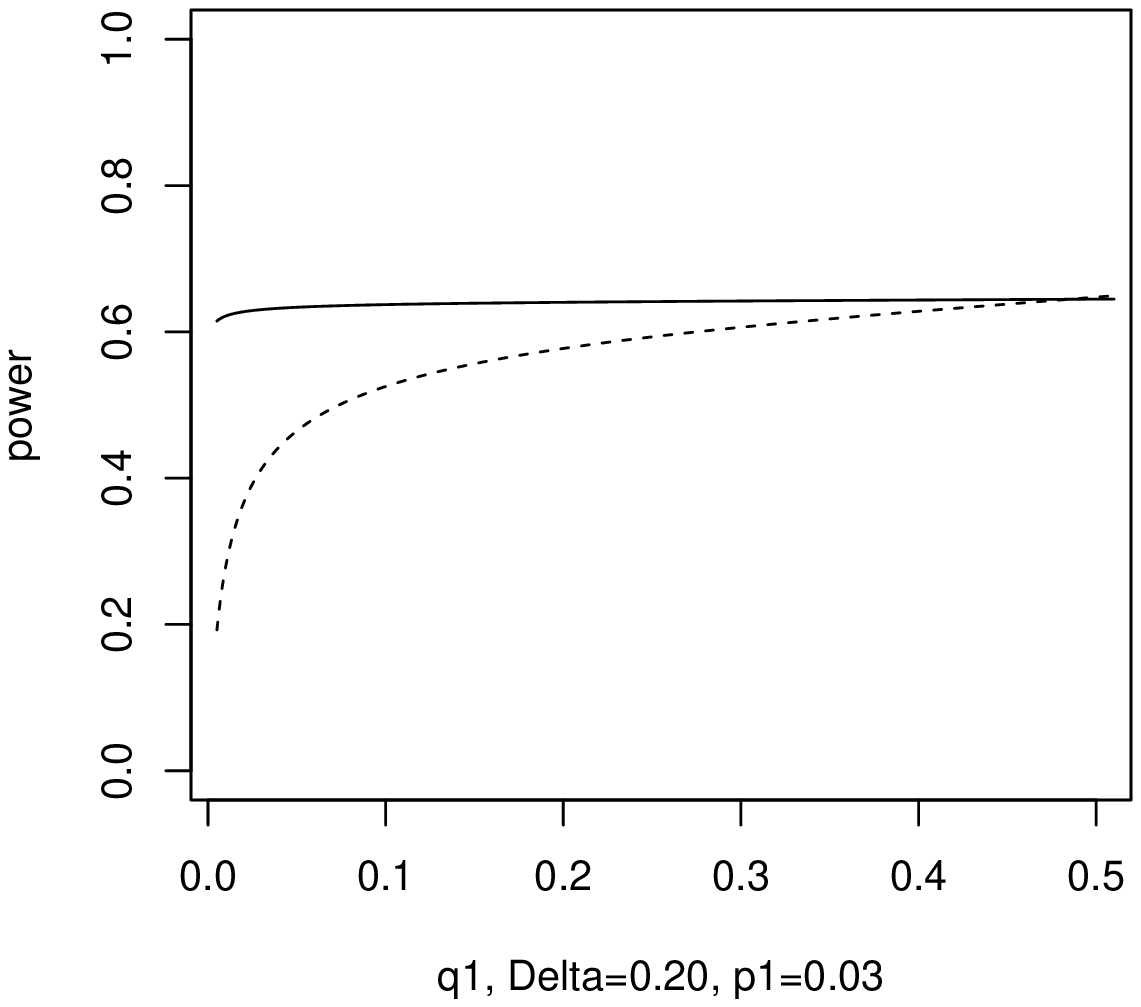}
\includegraphics[scale=0.55]{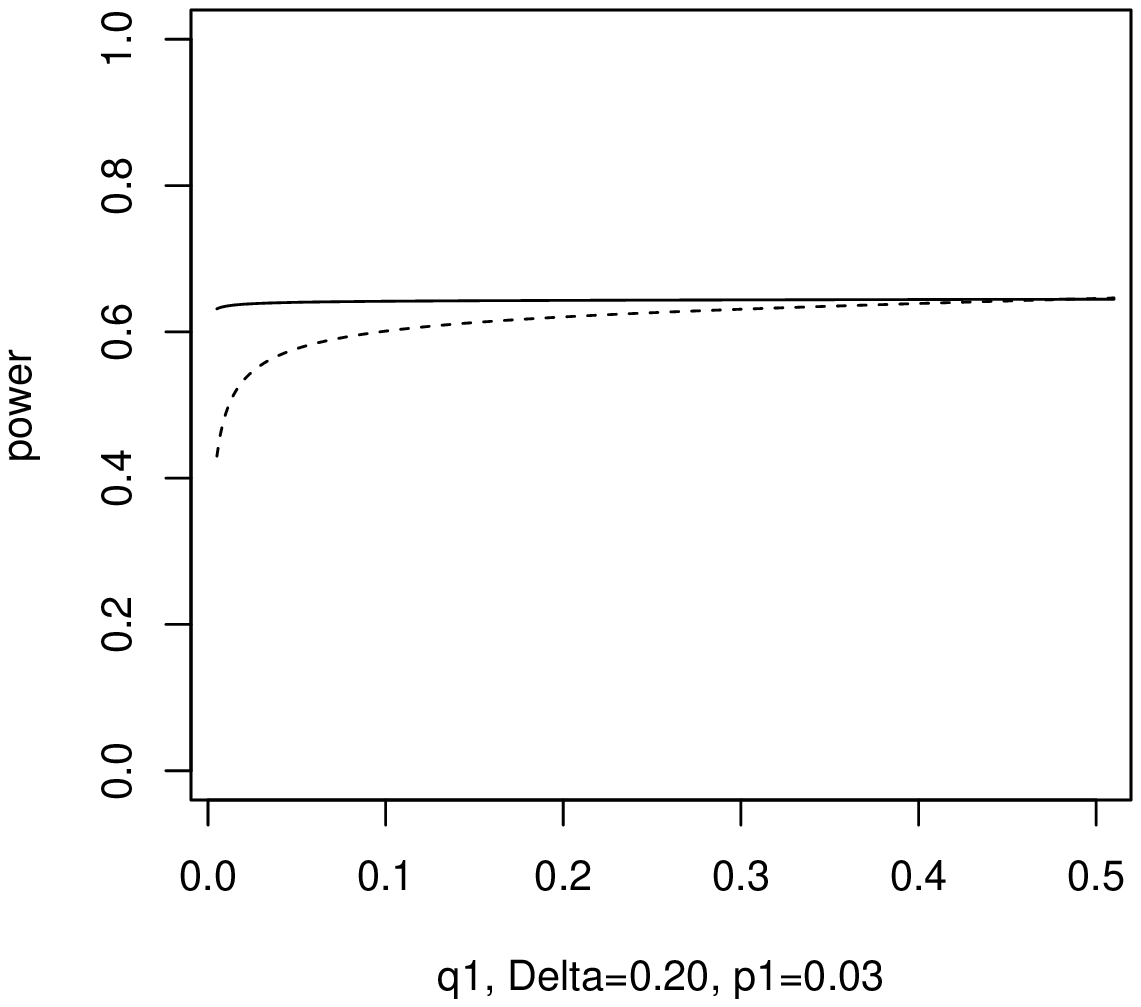}
\end{center}
\caption{Power functions for $\W$ (continuous lines), $\T$ (dashed lines), $\V$ (dotted line). Additive model with $p_1=0.03, \pi_{11}=0.10, \pi_{22}=0.02$, $\pi_{12}=0.06$. Top row: $R=S=10,000$ (balanced design). Bottom row left: $R = 6000$, $S=16,000$ (unbalanced design). Bottom row right: $R=16,000$, $S=6000$ (unbalanced design).}
\label{fig.power_asymptotic}
\end{figure}

\subsubsection*{Power comparison of $\W$ with $\TCATT$ and the LRM score test}

It has been shown that for the additive model and under the assumption of HWE, the LRM score test (with the observed minor allele count as independent variable) is equivalent to the CATT \cite{Zheng12}. Consequently, in this setting any test that is more powerful than the CATT is also more powerful than the score test, and vice-versa. In other words, it is sufficient to compare the powers of the test based on $\W$ and the CATT test.

Under HWE, for the CATT test statistic under the additive model ($\Zcatt{}(1/2)$) it holds
\begin{align*}
T^2 \;=\; \Zcatt{}^2(1/2) \frac{\hqpone\hqptwo}{\lambda \hqone^U\hqtwo^U + (1-\lambda)\hqone^A\hqtwo^A},
\end{align*}
with $\hqpone$ and $\hqptwo$ the pooled sample estimators for the $q_1$ and $q_2$ allele frequencies \cite{Zheng12}. This in turn yields
\begin{align}
\label{eqn.Wpi2}
\W^2\;=\;\Zcatt{}^2(1/2)\,\frac{q_{1,p}q_{2,p}}{q_{1,\pi}q_{2,\pi}}+o_P(1),
\end{align}
with $o_P(1)$ denoting a term that converges in probability to zero. Under the null hypothesis of no association, the fraction term in (\ref{eqn.Wpi2}) equals 1, meaning that the tests based on $\Wpi$ and $\Zcatt{}(1/2)$ have the same asymptotic level of significance. Under the alternative hypothesis, assuming that the minor allele $M_1$ is positively correlated with the causal variant (i.e.\ the sample of cases is enriched with carriers of the risk alleles at the causal variants) and the prevalence of cases is higher in the pooled sample than in the population, $\W$ is more powerful than $\Zcatt{}(1/2)$. This is because then the fraction term in (\ref{eqn.Wpi2}) is expected to exceed one, which leads to $q_{1,p}>q_{1,\pi}$ and $q_{1,p}q_{2,p}>q_{1,\pi}q_{2,\pi}$. Moreover, under this setting $\W$ is also more powerful than the LRM score test.

\subsubsection*{Take-away message of the comparisons}

The theoretical and the numerical results presented in this section show that under HWE and for the additive model, the test based on $\W$ is, under many relevant situations, more powerful than the test based on $\T$, $\Zcatt{}$, $T_{\chi^2}$ and the LRM score test. Moreover, the power functions for $\W$ are constant, indicating that the test does not favor markers with high MAFs, contrary to the other test considered.

\subsection{Robustness of $\W$ against misspecification of $\pi$}
\label{sec.Robustness}

As mentioned, $\W$ relies on an external source for an accurate estimate of the population prevalence $\pi$ which cannot be directly derived from the case-control data at hand. Fortunately, the information on disease prevalence often can be acquired from literature or relevant national registries (e.g.\ disease prevalences in the Netherlands are published by the National Institute for Public Health and the Environment). If no reliable estimate of $\pi$ can be obtained, a reasonable value can be guessed by relevant experts. Nonetheless, even if good estimates are available, it is relevant to study the robustness of the performance of $\W$ with respect to the quality of the estimate of $\pi$. For a fixed $\delta\in(0,1)$ define the test statistic $\Wd$ to be equal to $\W$ evaluated at $\hat\pi=\delta$. Under the null hypothesis, where $q_1^U=q_1^A$, the denominator of $\Wd$ converges in probability to $\sqrt{q_1q_2}$ irrespective of the value of $\delta$. Consequently, the type I error of the test based on $\W$ is insensitive to the quality of the prevalence estimate. However, the power of the test is dependent on the estimate for the prevalence. In Figure \ref{fig.power_robust}, on the left, the asymptotic power functions of $\T$ and $W_{\delta=\pi}$, $W_{\delta=0.05}$, $W_{\delta=0.1}$, $W_{\delta=0.2}$, $W_{\delta=0.3}$ as functions of $q_1$ are plotted. The value of $\pi$ was set at $\pi=0.0224$, while the other parameters were set at $\Delta=0.20, p_1=0.03$. The figure shows that for $\delta$ equal to or near $\pi$ the power functions are more or less constant with respect to the MAF, while for values of $\delta$ far from $\pi$ the power functions do vary with the MAF (they increase). For values of $\delta<\pi$ (underestimation of the prevalence) the power function of the test is slightly above that for $\delta=\pi$, although the difference is small and diminishes with increasing allele frequency $q_1$.

In Figure \ref{fig.power_robust} (right) the asymptotic power functions correspond to a setting of a more common disease, namely $p_1=0.2$, $\pi_{11}=0.40$, $\pi_{12}=0.10$, $\pi_{22}=0.25$, which yields $\pi=0.16$, and $R = S = 4000$. The power curves are for the test-statistic $W_\delta$ with $\delta=0.01$, $\delta=0.05$, $\delta=0.10$, $\delta=\pi$, $\delta=0.20$ (ordered top to bottom) and for $T$ (dotted line). The plot shows a flat power function for $W_{\delta=\pi}$, a slightly decreasing function for $\delta<\pi$ and a slightly increasing function for $\delta>\pi$. It also shows the robustness of the power of $W_\delta$ against minor misspecification of $\pi$ for both overestimated and underestimated $\pi$.

\begin{figure}
\begin{center}
\includegraphics[scale=0.55]{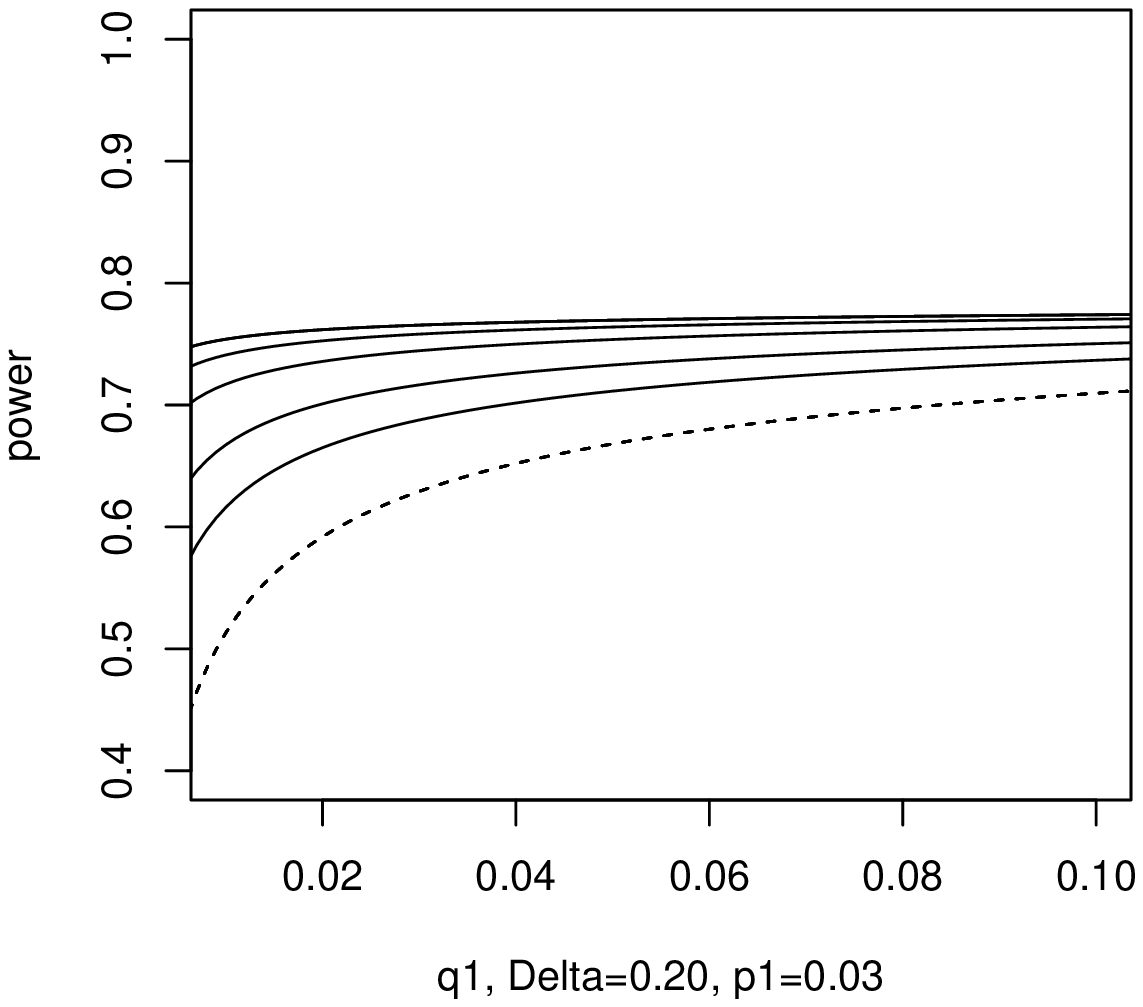}
\includegraphics[scale=0.55]{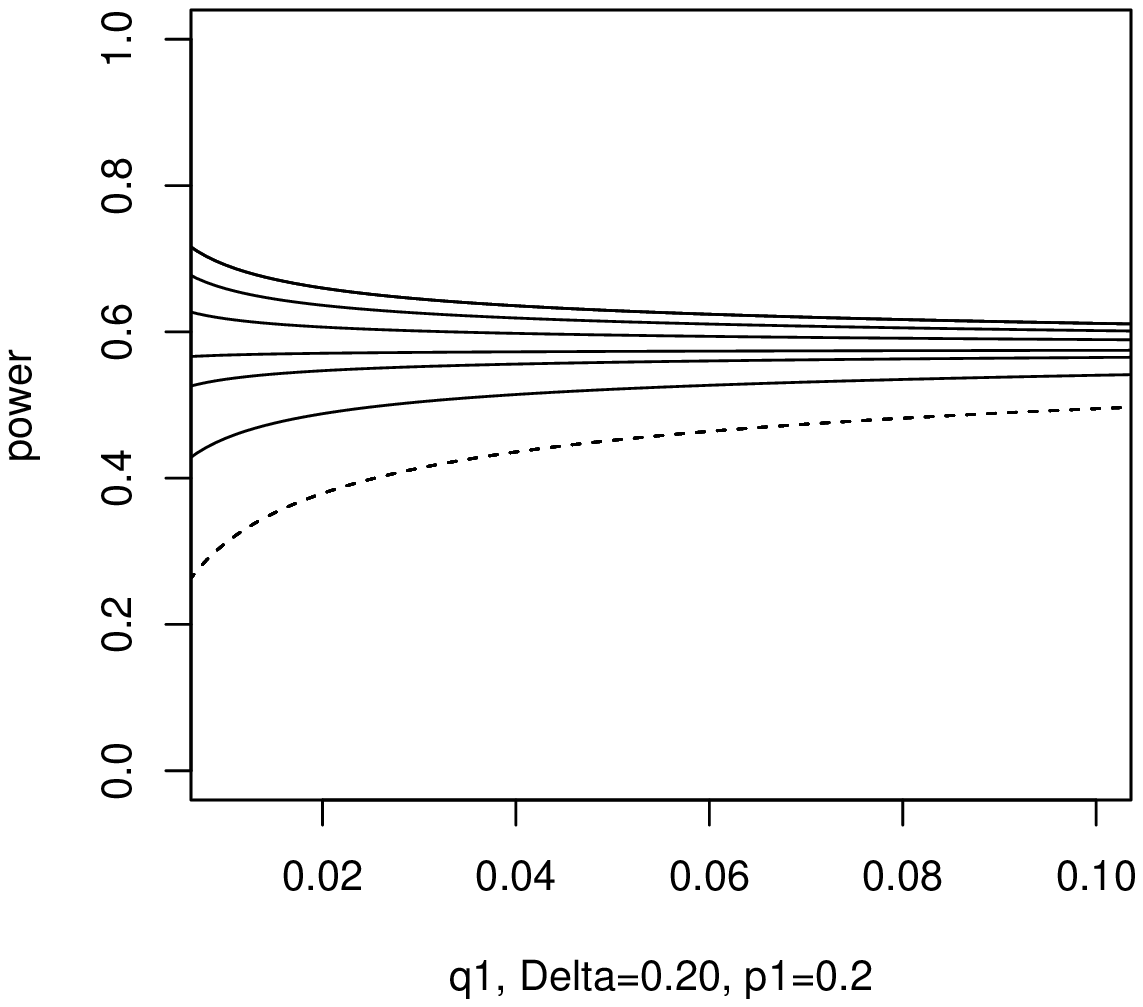}
\end{center}
\caption{Left plot: Power functions of $\T$ (dotted line) and $W_{\delta=0.01}$, $W_{\delta=\pi}$, $W_{\delta=0.05}$, $W_{\delta=0.1}$, $W_{\delta=0.2}$, $W_{\delta=0.3}$ (solid lines, ordered top to bottom) as a function of $q_1$, with $p_1=0.03,\pi=0.0224$, $\pi_{11}=0.10$, $\pi_{12}=0.06$, $\pi_{22}=0.02$. Right plot: Power functions of $\T$ (dotted line) and $W_{\delta=0.01}$, $W_{\delta=0.05}$, $W_{\delta=0.1}$, $W_{\delta=\pi}$, $W_{\delta=0.2}$, $W_{\delta=0.3}$ (solid lines, ordered top to bottom) as a function of $q_1$, with $p_1=0.2, \pi=0.16$, $\pi_{11}=0.40, \pi_{12}=0.10, \pi_{22}=0.25$. }
\label{fig.power_robust}
\end{figure}

\subsection{Simulation study: Type I error and  power for finite samples}
\label{sec.Simulation_finite_samples}

By their design, the p-values of the considered tests are derived using the asymptotic normality of the underlying test statistics $\W$ and $\T$. In this section we study the finite sample behavior of $\W$, including its robustness against departures from HWE. In an applied setting, while other factors such as the MAF also play a role, it is the sample size that is the primary driving factor of the accuracy of the asymptotic normal approximation underlying the p-values of the tests. The primary goal of the simulation study is to investigate the type I error behavior in a finite sample setting for a variety of MAFs ranging between 0.03 and 0.5, which was the range of MAFs observed in the major depression disorder data set analyzed in Section \ref{sec.Application}. In a typical GWAS a whole range of MAFs are present, which means that an appropriate measure of the expected type I error, and the one used in our simulation study, is the weighted average of the observed type I errors over the entire range of MAFs with weights equal to the expected relative representation of each MAF in the study.

For each MAF we simulated the marker alleles for $R$ cases and for $S$ controls with $\Delta=0$ (i.e. under the null hypothesis of no association). In terms of the ratio of cases to controls we considered two scenarios, namely the balanced design with equal numbers of cases and controls ($R=S$) and an unbalanced design with the number of controls twice the number of cases ($S=2R$). We chose to focus on a setting with an excess of controls as it is typically easier to find individuals from the control population. The selected parameter values can be seen in Table \ref{tab.ResultsHWE}.

Under each parameter setting we simulated 5 billion data sets and for each of them we calculated the statistics $T$, $W_{\delta=0.05}$, $W_{\delta=0.1}$, $W_{\delta=0.2}$, $W_{\delta=0.3}$ and $W_{\delta=0.4}$. The tests were performed using the asymptotic standard normal approximation at the significance level $\alpha=5\times 10^{-8}$, a value that is typically used in GWAS. The observed type I error for each statistic and each selected MAF was calculated. Note that the number of simulated data sets (billions) had to be very high given the low level of significance, which in turn had to be set low in order to emulate a GWAS setting.


The overall type I error estimate was computed as a weighted average of the (estimated) type I errors at a dense grid of MAF values with weights based on the expected relative frequencies of each MAF. Given that in the real-life data set analyzed in Section \ref{sec.Application} the observed distribution of the MAFs was very close to uniform between 0.03 and 0.5, it was therefore deemed sufficient to calculate the overall type I error as a simple (i.e. unweighted) mean of the individual simulated error rates at the grid covering the interval from 0.03 to 0.5 (with steps of 0.005).

\subsubsection*{Type I error for finite samples}

The results of the simulation studies for the type I error are presented in Table \ref{tab.ResultsHWE}. It shows the ratios of the observed type I errors and the significance level $\alpha$ for fixed values of MAF (in all but the last column of the table). Given the small value of $\alpha$, we are in fact verifying the accuracy of the far-tail asymptotic normal approximation of the true distributions of the test statistics. The estimates of the expected overall type I error in a GWAS are given in the last column of the table.

The simulation results show that for $\T$ the observed type I error is slightly inflated for the unbalanced design, while for the balanced designs the statistic performs quite well. The table also shows that the statistic $\W$ exhibits a slightly inflated overall type I error, although it is worth noting that the inflation is considerably stronger for markers with MAFs below 0.05 and small values of $\delta$ and it steadily decreases with increasing sample size. This behavior appears to be a consequence of the low accuracy of the normal asymptotic approximation in the far tails of the distribution. Seeing that the results show a decreasing trend of the inflation with sample size, a remedy would be an increase of the underlying sample size. Crucially, despite the sub-optimal behavior of $\W$ for the very small MAFs, it needs to be stressed that the power gains achieved by $\W$ relative to the commonly used tests are not solely or even primarily due to the inflated type I error, since with growing sample size the type I error inflation vanishes while the superior power performance remains. With $R=S=50,000$ the inflation for $\delta=0.05$ is essentially gone.

Besides the increased sample size, an alternative remedy of the type I error inflation is to use a larger value for $\delta$ when calculating $\Wd$. In other words, aim to "overestimate" the population prevalence of cases if $\pi$ is small. This can be an especially effective solution if used only for markers with low MAF (e.g.\ below 0.1). Unsurprisingly of course, this "overestimation" approach does come at a price in terms of decreased power. A further alternative option is to obtain the p-values for $\W$ using a permutation approach. This can be done either for all markers or only for the markers for which the type I error is expected to be inflated (typically those with small MAF).

\begin{table}
\begin{center}
\caption{Type I errors divided by $\alpha=5\times 10^{-8}$. The column \textsl{total} shows the weighted average of type I errors over the various MAFs.}
\begin{tabular}{l||r|r|r|r|r|r|r|r}
   &  \multicolumn{8}{c}{$q_1$}\\
 & \multicolumn{1}{c|}{$0.03$}
 & \multicolumn{1}{c|}{$0.05$}
 & \multicolumn{1}{c|}{$0.10$}
 & \multicolumn{1}{c|}{$0.20$}
 & \multicolumn{1}{c|}{$0.30$}
 & \multicolumn{1}{c|}{$0.40$}
 & \multicolumn{1}{c|}{$0.50$}
 & \multicolumn{1}{c}{total} \\
\hline
\multicolumn{9}{l}{{\underline{$R=S=5000$}}}\\
$\T$ & 0.97 & 1.02 & 1.04 & 1.03 & 1.06 & 1.02  & 0.93 & 1.02 \\
$\Wde{0.05}$  & 6.95 & 3.88  & 2.08 & 1.35 & 1.16  & 1.03 & 0.93 & 1.62\\
$\Wde{0.10}$  & 5.41 & 3.17 & 1.84 &  1.24 & 1.16 & 1.00  & 0.93& 1.47\\
$\Wde{0.20}$  & 3.01 & 2.21 & 1.46 & 1.13 & 1.12 & 1.00 & 0.93 & 1.25\\
$\Wde{0.30}$  & 1.87 & 1.40 & 1.21& 1.07 & 1.08 & 0.99  & 0.93& 1.10\\
$\Wde{0.40}$  & 1.16 & 1.11 & 1.03 & 1.02 &  1.06 & 0.97   & 0.93& 1.02\\
\hline
\multicolumn{9}{l}{{\underline{$2R=S=6000$}}}\\
$\T$ & 2.25 & 1.84 & 1.32 & 1.10 & 1.06 & 0.94 & 0.95 & 1.16\\
$\Wde{0.05}$  & 5.83 & 3.28 & 1.79 & 1.22 & 1.14 & 0.98 & 0.91 & 1.46\\
$\Wde{0.10}$  & 4.24 & 2.58 & 1.60 & 1.16 & 1.07 & 0.98 & 0.90 & 1.32\\
$\Wde{0.20}$  & 2.27 & 1.61 & 1.23 & 1.05 & 0.98 & 0.96 & 0.89 & 1.10\\
$\Wde{0.30}$  & 1.25 & 1.12 & 1.02 & 1.03 & 0.98 & 0.92 & 0.89 & 0.99\\
$\Wde{0.40}$  & 0.84 & 0.94 & 0.95 & 0.98 & 0.97& 0.93 &  0.89 & 0.95\\
\hline
\multicolumn{9}{l}{{\underline{$R=S=10,000$}}}\\
$\T$ & 0.91 & 0.93 & 1.00 & 1.14 & 0.92& 1.15 & 1.10 & 1.05\\
$\Wde{0.05}$  & 3.73 &2.31 & 1.58 & 1.27& 0.97& 1.11 & 1.09 & 1.33\\
$\Wde{0.10}$  & 3.02 & 1.99 & 1.43& 1.20& 0.96&  1.12 & 1.08 & 1.25\\
$\Wde{0.20}$  & 2.05 & 1.51 & 1.22 & 1.18& 0.93 & 1.12 & 1.06 & 1.15\\
$\Wde{0.30}$  & 1.44 & 1.19 & 1.06 & 1.15 & 0.91 & 1.11& 1.05 & 1.07\\
$\Wde{0.40}$  & 1.07 & 1.02 & 0.98 & 1.12 & 0.92 & 1.12& 1.05& 1.04\\
\hline
\multicolumn{9}{l}{{\underline{$R=S=20,000$}}}\\
$\T$ & 1.02 & 1.05 & 0.92 & 1.02 & 1.06 & 1.12 & 1.06 & 1.04 \\
$\Wde{0.05}$  & 2.04 & 1.74 & 1.22 & 1.07 & 1.10 & 1.09 & 1.06 & 1.17\\
$\Wde{0.10}$  & 1.79 & 1.58 & 1.14 & 1.05 & 1.09 & 1.09 & 1.06 & 1.14\\
$\Wde{0.20}$  & 1.37 & 1.34 & 1.03 & 1.04 & 1.08 &1.08 & 1.06& 1.09\\
$\Wde{0.30}$  & 1.10 & 1.16 & 0.99 & 1.02 & 1.08 & 1.09& 1.06& 1.06\\
$\Wde{0.40}$  & 1.02 & 1.06 & 0.95 & 1.02 & 1.05 & 1.10& 1.06 & 1.04
\end{tabular}
\end{center}
\label{tab.ResultsHWE}
\end{table}

\subsubsection*{Power (type II error) for finite samples}

Besides the type I error investigation, we also compared the power performances of the various tests in a finite sample size setting. We simulated data under a number of parameter combinations replicating each test 5 million times under each scenario. The significance level was again set at $5\times10^{-8}$ and the empirical power of each test was calculated as the fraction of time the test rejected the null at this level of significance separately for each scenario. The analysis showed that the finite sample empirical power functions are very similar to the asymptotic power functions (plots not shown).

\subsection{Adjustments under Hardy-Weinberg Disequilibrium (HWD)}
\label{sec.AdjustmentsHWD}

Many existing tests, including ours so far, implicitely rely on the validity of the assumption of Hardy-Weinberg equilibrium (HWE). In applications where such assumption is expected not to be appropriate, the usual approach is to rely on the Cochran-Armitage trend test (CATT) and its robustness against departures from HWE. Advantageously, our newly proposed test statistic $\W$ as well as $\T$ can both be robustified against departures from HWE.

In \cite{Zheng12,Schaid99} an adjusted test statistic $\THWD$ is described. It is found by replacing the estimated products $q_{1}^Uq_{2}^U$ and $q_{1}^Aq_{2}^A$ in the denominators of $\T$ by suitable estimators of $q_{1}^Uq_{2}^U+q_{11}^U-(q_{1}^U)^2$ and $q_{1}^A q_{2}^A+q_{11}^A-(q_{1}^A)^2$, where $q_{11}^U$ and $q_{11}^A$ denote the frequency of the genotype $(M_1,M_1)$ among the controls and among the cases, respectively. The adjustment follows from expressions for the variances of $\hqone^U$ and $\hqone^A$ derived without the assumption of HWE. The unknown frequencies in these expressions are estimated using the corresponding sample frequencies. The asymptotic normality of the (adjusted) test statistic $\THWD$ under the null hypothesis follows by the central limit theorem. Conveniently, the test statistic $\W$ can be adjusted in an analogous way. To that end we define
\begin{align*}
\WHWD\;=\;\frac{\sqrt{m}(\hqone^U-\hqone^A)}
{\sqrt{\hqpione\hqpitwo+(\hqhpioneone-\hqpione^2)}},
\end{align*}
where $\hqhpioneone$ is an estimate of $q_{11}$ defined analogously to $\hqpione$ of (\ref{eqn.hqhpii}).

\medskip

We also performed simulations that compared the performance of the adjusted test statistic $\WHWD$ with the test $\Zcatt{}(1/2)$ for different values of $\delta$, to investigate the combination of both the robustness of the test in case of a misspecified population prevalence and deviation from HWE. The results (Appendix B) show that in the considered settings the test based on $\WHWD$ is slightly more powerful than the CATT.

\subsection{Multiple causal variants}
\label{sec.Multiple_variants}

Until this point we focused on testing for association between markers and a single causal variant. We showed that the p-values of $\W$ can be used for identification of markers in strong LD with the causal variant in a way that does not preferentially select markers with high MAF, a property that is rooted in the equation (\ref{eqn.approx_W_and_T}), where the term $B$ on the right-hand side is the same for all considered markers. Unfortunately, the argument only applies to the situations with a single variant, given that the term $B$ is causal-variant specific. Since the power function and the p-value of $\W$ strongly depends on the value of $B$, only the p-values of markers that are in LD with the same causal variant can be directly used as measures of the degree of LD with one of the causal variants. In practice, this means that one needs to be careful when comparing p-values of markers that are located far apart on the genome, especially if they are located on different chromosomes. This holds for all tests mentioned in this paper. For markers that are in LD with multiple causal variants it is in general very complicated to quantify the corresponding effect towards the p-values of all of these tests.

\section{Application}
\label{sec.Application}

The newly proposed test based on $\W$ was applied to a case-control data set to identify the genomic regions that confer risk for and protection against major depressive disorder (MDD). To this day, the efforts to identify such regions have not been very successful \cite{Boomsma08,Wray18}. While this might be partially due to a lack of consensus on the exact definition of the condition (MDD) itself, the possibility that the disease is influenced by many genetic loci each with a small marginal effect could be an even better explanation for the lack of success. Then, in order to detect these loci, a more sensitive statistical test appears to be needed. We believe that our novel test statistic might be able to at least partially answer that call.

The analyzed MDD data set was primarily collected using two databases collected in the Netherlands \cite{Boomsma08}. The cases, i.e. the individuals affected by MDD, came primarily from the purpose-specific Netherlands Study of Depression and Anxiety (NESDA) database, while the controls came from the Netherlands Twin Registry (NTR), a database containing primarily data about twin siblings and their parents. In order to achieve sufficient independence among the controls, for each pedigree from NTR a single individual was randomly selected, thereby making all individuals in the data set (biologically) unrelated and thus statistically independent. The total number of cases and controls equaled 2306 and 1027, respectively. In the analysis, only markers for which the sum of the minor allele frequency among the cases and the controls was at least 0.02 were included, resulting in a data set with over 600K markers. The population prevalence of MDD in the relevant population is accurately estimated as $\hat\pi=0.15$ \cite{Boomsma08}, which is the value we used in the test statistic $\W$.

For all markers in the database the p-values corresponding to $\W$ and $\T$ were computed using their asymptotic distribution. Figure \ref{fig.p-value_scatter} shows a scatter-plot of the p-values for the two test statistics. The markers were divided into three categories based on their MAFs in the sample of controls. The three plots show the degree of dissimilarity of the two statistics. The observed general pattern is that for high MAF the p-values of the two tests are highly similar and with decreasing MAF they become increasingly dissimilar.

\begin{figure}
\begin{center}
\includegraphics[scale=0.45]{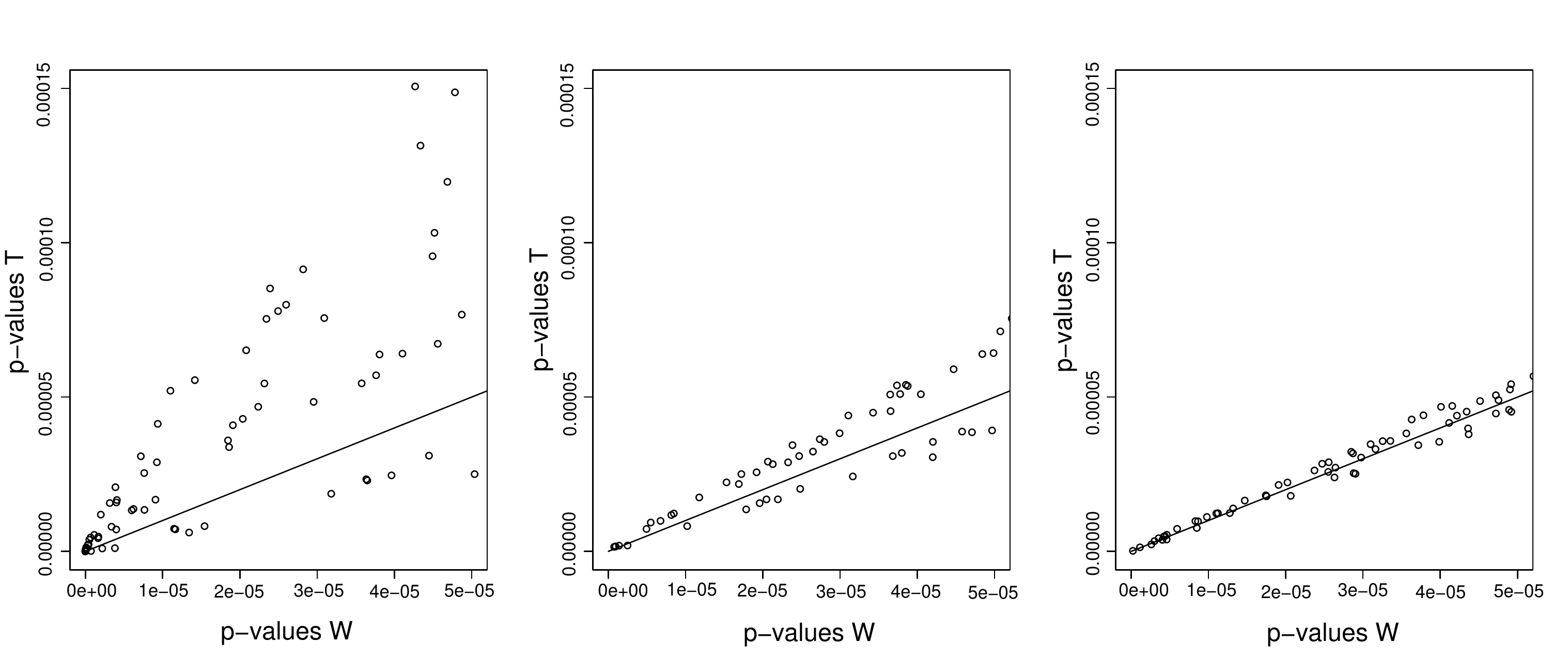}
\end{center}
\caption{Scatter-plots of the p-values for the test statistics $\T$ against those based on $W_{0.15}$. From left to right: $\hqone^U\leq0.10,\hqone^U\in(0.10,0.25],\hqone^U\in(0.25,0.50]$.}
\label{fig.p-value_scatter}
\end{figure}

The analysis yielded several markers with p-values below the threshold $\alpha=5.0 \times 10^{-8}$. With the exception of chromosome 15, all other chromosomes had no more than a single significant marker. On chromosome 15 we identified 6 markers with p-values below $\alpha$. These were the markers with RS numbers rs10152733 (MAFs 0.0732 and 0.0245 among the cases and controls, respectively), rs3784362 (0.0212/0.00734), rs1463912 (0.0249/0.00831), rs7168666 (0.0216/0.00734), rs1820416 (0.0210/0.00734), rs4777166 (0.0202/0.00642). Moreover, two of these markers, namely rs10152733 and rs1463912, were also identified as significant by the statistic $\T$. Figure 6 shows a Manhattan plot of the $-\log(p$-value) based on $W_{\hat\pi=0.15}$ for chromosome 15.

\begin{figure}
\begin{center}
\includegraphics[scale=0.45]{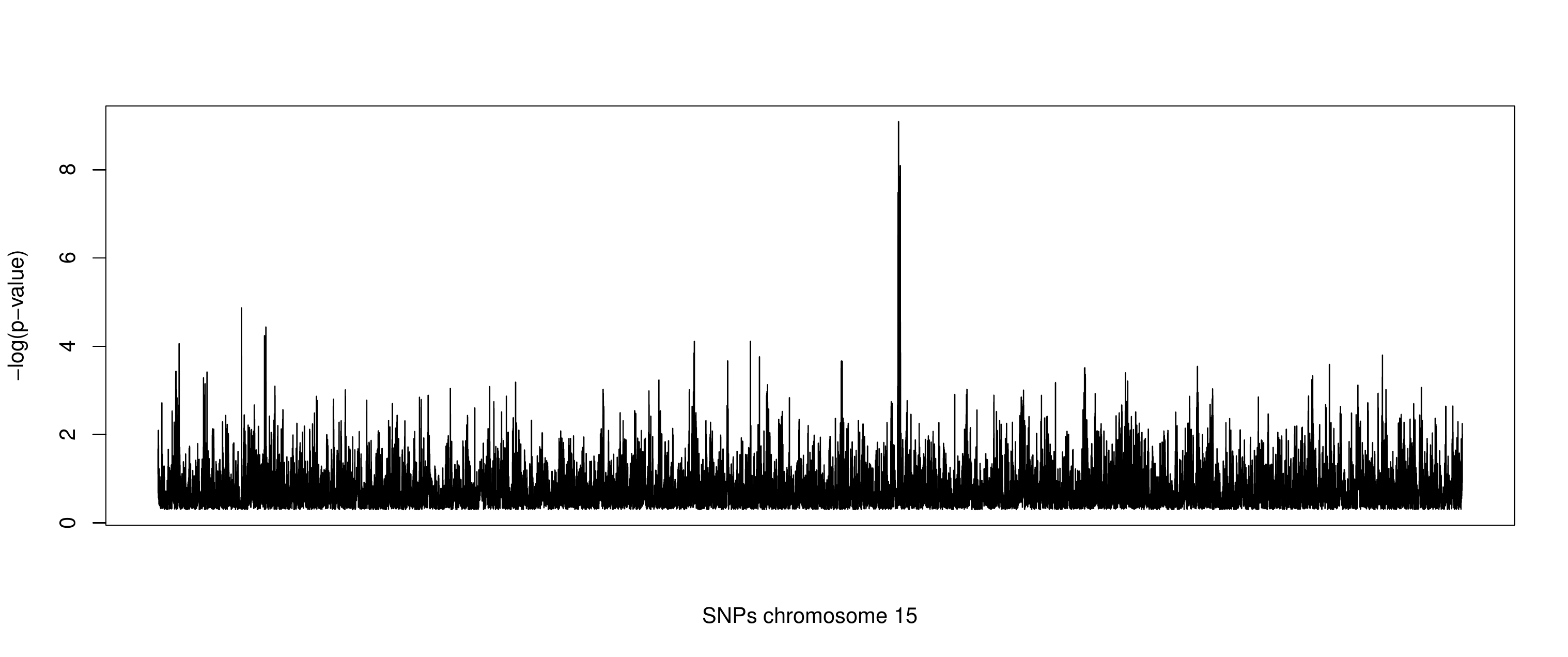}
\end{center}
\caption{Manhattan plot of the $-\log(p$-value) based on $W_{\hat\pi=0.15}$ for chromosome 15.}
\label{fig.manhattan}
\end{figure}

\cite{Boomsma08} contains the results of 10 GWAS studies for MDD. The data for these studies were collected in relation to various conditions, not solely MDD. In addition to MDD, these included recurrent MDD, alcoholism or nicotine dependence, and others. Some of the studies showed significant association between genomic regions and MDD, however, none of these were replicated by an independent study. Among the markers identified by our analysis none have been found by one of the studies in \cite{Boomsma08}. In \cite{Wray18} a genome-wide association meta-analysis based in 135,458 cases and 344,901 controls
was conducted. They identified 44 loci. None of these were replicated by our analysis.

\section{Discussion}

Various strategies for selecting markers for follow-up studies using the case-control framework have been proposed in the literature. As discussed in this paper, these include the allele-based test of association which assesses the difference of marker allele frequencies between the cases and the controls, the LRM score test, the chi-square test for association, and the Cochran-Armitage trend test. An often observed shortcoming of the existing strategies is their preference for markers with high MAFs at the expense of markers with low MAFs.

In this paper a novel allele-frequency-based test statistic for finding association between genetic markers and a disease of interest is proposed. A competitive advantage of the statistic is that it does not favor markers with high MAFs. In light of the known and suspected importance of rare alleles, this means that our new test is much more suitable to be used to asses the association of genetic markers with a disease based on the observed p-values.

An additional advantage of the new test is that the statistic can be efficiently computed using basic summary statistics of the case-control sample. We derived the asymptotic power function of the test, which allows for efficient computation of the associated p-values, an important strength especially compared to approaches that rely on permutation schemes in order to obtain the p-values.

We studied the power performance of the newly proposed test and compared it with a number of commonly used alternative tests under numerous scenarios. The obtained results were favorable for the new test. It was observed that compared to the existing tests the new test possesses superior power for markers with low MAF. This behavior is unsurprising in light of the fact that the power functions of the new tests are (nearly) constant for various marker allele frequencies, while the power functions of the competing tests generally decline with decreasing MAF.

The calculation of the newly proposed test statistic requires the estimation of the prevalence of the disease $\pi$. This value cannot be directly obtained from the sampled case-control data alone and the estimation requires external data. This, however, is not a major obstacle for the usage of the test since for many diseases suitable estimates of population prevalence are readily available from sources such as national registries. Furthermore, we showed that the novel statistic is fairly robust against misspecification of the prevalence parameter, which means that even when an accurate estimate of the prevalence is not available for the population of interest, an inaccurate (over-)estimate (e.g. based on a related population) can be used without substantially harming the power of the resulting test.

Besides power, we also studied the type I error of the new test in the context of a finite sample setting. The simulations give evidence that the type I error of the new test is inflated for small MAFs and low prevalence $\pi$. The specific degree of inflation depends on the underlying sample size. However, we observed that the inflation decreases with increasing sample size and therefore cannot be the reason for the observed power gains of the new test. Moreover, our simulation showed that the overall type I error for the new test is expected to be only slightly inflated in the context of a genome-wide study with a broad range of MAF values.

For many traits, only a small proportion of the variability in the population can be explained by causal variants that have been identified so far \cite{Manolio09}. One possible explanation for this "missing heritability" is the presence of low-frequency variants with relatively strong effect on disease risk. Indeed, rare variants found by re-sequencing have already been described to affect complex diseases \cite{Schork09}. Given the properties of our test statistics, and in the light of the current interest in detecting association between complex phenotypes and low-frequency variants and locating causal variants with small minor allele frequencies \cite{Sham14}, we believe that the novel method presented in this paper could prove to be a very useful addition to the landscape of methods available for tackling these important problems of genetics.

\section*{Acknowledgement}

We express gratitude to the Netherlands Twin Registry (NTR) and de Nederlandse Studie naar Depressie en Angst (NESDA) for making the data available for application of the theory in this paper. The data was collected with support from NWO (904-61-090; 904-61-193; 480-04-004; 400-05-717; SPI 56-464-14192), Center for Medical Systems Biology (NWO Genomics); the EU (EU/QLRT-2001-01254); Geestkracht program of ZonMW (10-000-1002), Neuroscience Campus Amsterdam (NCA) and the EMGO+ institute; and institutes involved in NESDA (VU University Medical Centre, Leiden University Medical Centre, GGZinGeest, Rivierduinen, University Medical Centre Groningen, GGZ Lentis, GGZ Friesland, GGZ Drenthe); the Genetic Association Information Network (GAIN); ARRA grants 1RC2 MH089951-01 and 1RC2MH089995-01; FP7-HEALTH-F4-2007-201413; European Research Council (ERC-230374).

\section*{Conflict of interest statement}

The authors have declared no conflict of interest.

\nocite{*}
\bibliographystyle{unsrtnat}
\bibliography{references}

\section*{Appendix A: derivation of the equality (\ref{eqn.pMLUDelta})}

In this section we formulate two lemmas which together constitute the proof of equality (\ref{eqn.pMLUDelta}). We note that throughout Appendix A (and only there) we assume that the genotypes are ordered. For simplicity of notation, without loss of generality, we assume that the total number of causal variants equals two.

\begin{lemma}
Let there be two causal variants and a marker of interest. Denote the genotypes at the two causal variants as $(A_i,A_j)$ and $(B_i,B_j)$ with $i,j=1,2$, respectively. Suppose that the marker is in linkage disequilibrium with the first causal variant and in linkage equilibrium with the second causal variant. Then,
\begin{align*}
\rP(M_i M_j | A) &\;=\;q_{ij} + D_{11ij}\frac{\pi_{11}-\pi_{12}}{\pi} + D_{22ij}\frac{\pi_{22}-\pi_{12}}{\pi}, \\
\rP(M_i M_j | U) &\;=\;q_{ij} + D_{11ij}\frac{\pi_{12}-\pi_{11}}{1-\pi} + D_{22ij}\frac{\pi_{12}-\pi_{22}}{1-\pi},
\end{align*}
where $D_{ijkl}=\rP(A_iA_jM_kM_l)-\rP(A_iA_j)\rP(M_kM_l)$, and $\rP(A_iA_jM_kM_l)$ equals the probability that a random individual from the total population has haplotypes $(A_i,M_k)$ and $(A_j,M_l)$. Moreover, if HWE holds, then also $D_{ijkl}=\rP(A_iM_k)\rP(A_jM_l)-p_ip_jq_kq_l$.
\end{lemma}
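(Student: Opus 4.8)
The plan is to reduce both identities to a single computation of the conditional disease probability $\rP(A\mid M_iM_j)$ via Bayes' rule, and then expand that probability over the genotypes at the first causal variant. First I would write
\[
\rP(M_iM_j\mid A)=\frac{\rP(A\mid M_iM_j)\,q_{ij}}{\pi},
\qquad
\rP(M_iM_j\mid U)=\frac{\rP(U\mid M_iM_j)\,q_{ij}}{1-\pi},
\]
so that, using $\rP(U\mid M_iM_j)=1-\rP(A\mid M_iM_j)$, everything reduces to evaluating $\rP(A\mid M_iM_j)$. I would obtain the latter by conditioning on the (ordered) genotype $(A_k,A_l)$ at the first causal variant, writing $\rP(A\mid M_iM_j)=\sum_{k,l}\rP(A\mid A_kA_l,M_iM_j)\,\rP(A_kA_l\mid M_iM_j)$.

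The main step, and the one I expect to be the crux, is to argue that $\rP(A\mid A_kA_l,M_iM_j)=\pi_{kl}$. This is precisely where the two hypotheses of the lemma enter. Disease risk is determined by the genotypes at the causal variants (the marker is irrelevant once both causal genotypes are fixed), and since the marker is in linkage equilibrium with the second causal variant while being effectively unlinked from it, the marker is conditionally independent of the second variant given the first. Combining these, conditioning additionally on $M_iM_j$ does not alter the disease risk beyond what the first-variant genotype already carries, so the conditional probability collapses to the penetrance $\pi_{kl}$. Establishing this conditional independence cleanly is the delicate part; what follows is algebra.

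With $\rP(A\mid A_kA_l,M_iM_j)=\pi_{kl}$ in hand I would substitute $\rP(A_kA_l\mid M_iM_j)=(D_{klij}+\rP(A_kA_l)\,q_{ij})/q_{ij}$, which is just the definition of $D_{klij}$ rearranged, and use $\sum_{k,l}\pi_{kl}\rP(A_kA_l)=\pi$ to get $\rP(A\mid M_iM_j)=\pi+q_{ij}^{-1}\sum_{k,l}\pi_{kl}D_{klij}$. The sum over the four ordered genotypes then simplifies using two facts: the marginalization identity $\sum_{k,l}D_{klij}=0$ (both $\rP(A_kA_lM_iM_j)$ and $\rP(A_kA_l)\,q_{ij}$ sum to $q_{ij}$), and the symmetry $\pi_{12}=\pi_{21}$. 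Together these collapse the sum to $(\pi_{11}-\pi_{12})D_{11ij}+(\pi_{22}-\pi_{12})D_{22ij}$. Multiplying back by $q_{ij}/\pi$ yields the stated expression for $\rP(M_iM_j\mid A)$, and the control formula follows identically, the sign flip from $1-\rP(A\mid M_iM_j)$ and the factor $1-\pi$ producing the claimed coefficients.

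For the final HWE claim I would invoke the two-locus form of Hardy--Weinberg equilibrium (random union of gametes): the ordered diplotype probability factorizes into a product of haplotype frequencies, $\rP(A_iA_jM_kM_l)=\rP(A_iM_k)\,\rP(A_jM_l)$, while single-locus HWE at each locus gives $\rP(A_iA_j)=p_ip_j$ and $\rP(M_kM_l)=q_kq_l$. Substituting these into the definition $D_{ijkl}=\rP(A_iA_jM_kM_l)-\rP(A_iA_j)\,\rP(M_kM_l)$ immediately produces $\rP(A_iM_k)\,\rP(A_jM_l)-p_ip_jq_kq_l$, completing the lemma.
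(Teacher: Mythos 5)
Your proof is correct, and it is essentially the Bayes-dual of the paper's argument. The paper expands $\rP(M_iM_j\mid A)$ over the ordered genotypes at both causal variants, drops the second variant using the linkage-equilibrium assumption, and then applies Bayes' rule in the form $\rP(A_kA_l\mid A)=p_{kl}\pi_{kl}/\pi$; you invert first, expand $\rP(A\mid M_iM_j)$ over the first-variant genotypes only, and collapse $\rP(A\mid A_kA_l,M_iM_j)$ to $\pi_{kl}$. From that point the two computations coincide: both reduce the problem to evaluating $\sum_{k,l}\pi_{kl}D_{klij}$, and both use $\sum_{k,l}D_{klij}=0$ together with $\pi_{12}=\pi_{21}$ to obtain the stated coefficients. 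The step you flag as delicate --- $\rP(A\mid A_kA_l,M_iM_j)=\pi_{kl}$ --- packages exactly the two assumptions the paper uses implicitly (disease status depends on the genome only through the causal genotypes, and the marker is conditionally independent of the second variant given the first), and your justification is at the same level of rigor as the paper's own second equality, which asserts $\rP(M_iM_j\mid A_kA_lB_mB_n)=\rP(M_iM_j\mid A_kA_l)$ without further comment. What your route buys is that the control formula drops out in one line from $\rP(U\mid M_iM_j)=1-\rP(A\mid M_iM_j)$, whereas the paper repeats the whole computation ``analogously''; your handling of the HWE assertion matches the paper's, which simply declares it trivial.
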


\begin{proof}
Write
\begin{align*}
\rP(M_iM_j | A)&\;=\;\sum_{k,l,m,n} \rP(M_iM_j | A_k A_l B_m B_n) \rP(A_k A_l B_m B_n | A)\\
&\;=\; \sum_{k,l,m,n} \rP(M_iM_j | A_k A_l) \rP(A_k A_l B_m B_n | A)\\
&\;=\; \sum_{k,l} \rP(M_iM_j | A_k A_l) \rP(A_k A_l | A)\\
&\;=\; q_{ij} + \rP(M_iM_j | A_1 A_1)\frac{p_{11}\pi_{11}}{\pi} +
(p_{12}\rP(M_iM_j | A_1 A_2)+p_{21}\rP(M_iM_j | A_2 A_1))\frac{\pi_{12}}{\pi} \\
&\qquad + \; \rP(M_iM_j|A_2 A_2)\frac{p_{22}\pi_{22}}{\pi} - \frac{q_{ij} (p_{11}\pi_{11} + (p_{12}+p_{21})\pi_{12} + p_{22}\pi_{22})}{\pi} \\
&\;=\; q_{ij} + \frac{D_{11ij}\pi_{11}}{\pi} + \frac{(D_{12ij}+D_{21ij})\pi_{12}}{\pi}+\frac{D_{22ij}\pi_{22}}{\pi} \\
&\;=\; q_{ij} + D_{11ij}\frac{\pi_{11}-\pi_{12}}{\pi} + D_{22ij}\frac{\pi_{22}-\pi_{12}}{\pi},
\end{align*}
where the last equality follows from $D_{12ij}+D_{21ij}=-(D_{11ij} + D_{22ij})$, since $\sum_{k,l=1}^{2,2}D_{klij}=0$. The expression for $\rP(M_iM_j | U)$ is found analogously. The assertion requiring HWE is trivial.
\end{proof}

\begin{lemma}
\label{lem.freqs}
The frequencies of the allele $M_k$ among the cases and the controls satisfy
\begin{align}
\label{eqn.Mk}
q_{k}^A\;=\;\rP(M_k| A)
&\;=\;q_{k} + p_1 D_{1k}\frac{\pi_{11}-\pi_{12}}{\pi} - p_2 D_{1k}\frac{\pi_{22}-\pi_{12}}{\pi} \\
&\;=\; q_{k} + \sqrt{p_1p_2q_1q_2} \Delta_{1k} \Big(p_1 \frac{\pi_{11}-\pi_{12}}{\pi} - p_2 \frac{\pi_{22}-\pi_{12}}{\pi}\Big) \nonumber\\
q_{k}^U\,=\,\rP(M_k| U)
&\;=\; q_k +  p_1 D_{1k} \frac{\pi_{12}-\pi_{11}}{1-\pi} - p_2 D_{1k} \frac{\pi_{12}-\pi_{22}}{1-\pi}\nonumber \\
&\;=\;q_k +  \sqrt{p_1p_2q_1q_2} \Delta_{1k} \Big(p_1\frac{\pi_{12}-\pi_{11}}{1-\pi} - p_2 \frac{\pi_{12}-\pi_{22}}{1-\pi}\Big),\nonumber
\end{align}
where $\Delta_{ik}=D_{ik}/\sqrt{p_1p_2q_1q_2}$, $D_{ik} = \rP(A_iM_k) - p_iq_k$ for $i,k=1,2$, with $\rP(A_iM_k)$ denoting the $(A_i,M_k)$-haplotype frequency in the total population.
\end{lemma}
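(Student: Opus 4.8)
The plan is to obtain the single-allele frequencies $q_k^A$ and $q_k^U$ by marginalizing the two-locus marker genotype frequencies already computed in the previous lemma. The key starting observation is that, under the ordered-genotype convention in force throughout Appendix A, the frequency of the allele $M_k$ among the cases is the marginal
\begin{align*}
q_k^A \;=\; \rP(M_k\mid A) \;=\; \sum_{j=1}^2 \rP(M_k M_j\mid A),
\end{align*}
and analogously $q_k^U=\sum_{j=1}^2\rP(M_kM_j\mid U)$. Indeed, summing the ordered genotype frequency over the second marker allele leaves exactly the probability that the first allele is $M_k$, which (by the symmetry of the ordered representation, itself a consequence of $\pi_{12}=\pi_{21}$) equals the allele frequency of $M_k$. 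I would therefore substitute the two expressions for $\rP(M_iM_j\mid A)$ and $\rP(M_iM_j\mid U)$ from the previous lemma and sum term by term over $j$.

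Carrying out the summation requires three elementary identities. First, $\sum_{j=1}^2 q_{kj}=q_k$, which reproduces the leading term $q_k$. Second, I would evaluate the two disequilibrium sums. Using the HWE-factored form $D_{11kj}=\rP(A_1M_k)\rP(A_1M_j)-p_1^2q_kq_j$ supplied by the previous lemma (valid because under HWE the two haplotypes of an individual are independent), together with the marginal identities $\sum_{j}\rP(A_1M_j)=p_1$ and $\sum_j q_j=1$, one gets
\begin{align*}
\sum_{j=1}^2 D_{11kj}\;=\;p_1\big(\rP(A_1M_k)-p_1q_k\big)\;=\;p_1 D_{1k},
\qquad
\sum_{j=1}^2 D_{22kj}\;=\;p_2 D_{2k}.
\end{align*}
Third, I would invoke $D_{1k}+D_{2k}=\rP(A_1M_k)+\rP(A_2M_k)-(p_1+p_2)q_k=q_k-q_k=0$, i.e.\ $D_{2k}=-D_{1k}$, so that the second sum collapses to $-p_2 D_{1k}$ and everything is expressed through the single quantity $D_{1k}$.

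Assembling these pieces for the case expression gives
\begin{align*}
q_k^A\;=\;q_k+p_1D_{1k}\frac{\pi_{11}-\pi_{12}}{\pi}-p_2D_{1k}\frac{\pi_{22}-\pi_{12}}{\pi},
\end{align*}
which is the first claimed line; substituting $D_{1k}=\sqrt{p_1p_2q_1q_2}\,\Delta_{1k}$ then yields the second line. The control expression follows in exactly the same way from the formula for $\rP(M_iM_j\mid U)$, with $\pi$ replaced by $1-\pi$ and the $\pi$-differences appearing in the order $\pi_{12}-\pi_{11}$ and $\pi_{12}-\pi_{22}$ as stated. I expect the only genuinely delicate points to be the marginalization step, which depends on the ordered-genotype convention to identify $\sum_j\rP(M_kM_j\mid A)$ with the allele frequency, and the reliance on the HWE-factored form of $D_{11kj}$ and $D_{22kj}$ combined with the collapse $D_{2k}=-D_{1k}$; the remainder is routine bookkeeping.
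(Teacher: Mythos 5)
Your proposal is correct and follows essentially the same route as the paper: both start from the genotype frequencies of the previous lemma, reduce the allele frequency to a sum of those genotype frequencies (the paper writes it as $\rP(M_kM_k|A)+\tfrac12\rP(M_kM_{\bar k}|A)+\tfrac12\rP(M_{\bar k}M_k|A)$, which under the symmetry of the ordered representation coincides with your marginal $\sum_j\rP(M_kM_j|A)$), collapse the disequilibrium sums to $p_1D_{1k}$ and $p_2D_{2k}$, and finish with $D_{2k}=-D_{1k}$. Your explicit HWE-factored computation of $\sum_jD_{11kj}=p_1D_{1k}$ simply spells out a step the paper leaves implicit.
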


\begin{proof}
Define $\bar k=3-k$. Then
\begin{align*}
q_{k}^A&\;=\;\frac{1}{2}\rP(M_kM_{\bar k}| A) + \frac{1}{2}\rP(M_{\bar k} M_k| A)+ \rP(M_kM_k| A) \\
&\;=\;q_{k} + \Big(\frac{1}{2}D_{11k\bar k}+\frac{1}{2}D_{11\bar k k}+D_{11kk}\Big)\frac{\pi_{11}-\pi_{12}}{\pi}
 + \Big(\frac{1}{2}D_{22k\bar k}+\frac{1}{2}D_{22\bar k k}+D_{22kk}\Big)\frac{\pi_{22}-\pi_{12}}{\pi}\\
&\;=\;q_{k} + p_1 D_{1k}\frac{\pi_{11}-\pi_{12}}{\pi} + p_2 D_{2k}\frac{\pi_{22}-\pi_{12}}{\pi}.
\end{align*}
Since $D_{1k}=-D_{2k}$, the above expression further equals the right-hand side of (\ref{eqn.Mk}). The expression for $q_{k}^U$ is found analogously.
\end{proof}

A crucial consequence of Lemma \ref{lem.freqs} is the equality
\begin{align*}
q_{1}^U - q_{1}^A \; = \; \sqrt{p_1p_2q_1q_2}\Delta_{11} \; \frac{p_1(\pi_{12}-\pi_{11}) + p_2 (\pi_{22}-\pi_{12})}{\pi(1-\pi)}.
\end{align*}
In the case that the marker is in fact the causal variant (i.e. $\Delta_{11}=1, p_1=q_1$ and $p_2=q_2$), we get
\begin{align*}
p_{1}^U - p_{1}^A \;=\; p_1p_2 \; \frac{ p_1(\pi_{12}-\pi_{11}) + p_2(\pi_{22}-\pi_{12}) }{\pi(1-\pi)}.
\end{align*}
Combining the two displays immediately yields (\ref{eqn.pMLUDelta})

\section{Appendix B: Comparison of $\WHWD$ and $\Zcatt{}(1/2)$}

First we focus on the type I error behavior. We simulated data under the null hypothesis of no association with 2000 cases and 2000 controls. The MAF of the marker was again varied over a broad range of values and the Wright's inbreeding coefficient was alternatively set to $0.1$ and $0.2$. Using the significance threshold of $\alpha=0.05$, we repeated one million times the simulation of data and hypothesis testing. The observed type I error rates were all close to $\alpha$, like it should be (results not shown).

Next, we performed simulations to compare the power of the $\Zcatt{}(1/2)$ and $\WHWD$ statistics  under deviations of Hardy Weinberg equilibrium and misspecified population prevalence. The allele frequency $p_1$ was set equal to 0.03 and the Wright's inbreeding coefficient for the causal variant was alternatively set to $0.1$ and $0.2$. Given the non-zero value of $\Delta$ ($\Delta = 0.10$), the alleles at the marker were simulated to also be in HWD. We set $R=S=4000$ and $\pi_{11}=0.10$, $\pi_{12}=0.06$ and $\pi_{22}=0.02$. The observed power functions are plotted in Figure \ref{fig.ResultsHWD}, which shows a slight power superiority of the test based on $\WHWD$ over the Cochran-Armitage test. Furthermore, the power of the test $\WHWD$ as a function of the MAF $q_1$ is constant, once again illustrating how even the robustified version of the new test does not unjustly prefer markers with high MAF.

\begin{figure}
\begin{center}
\includegraphics[scale=0.55]{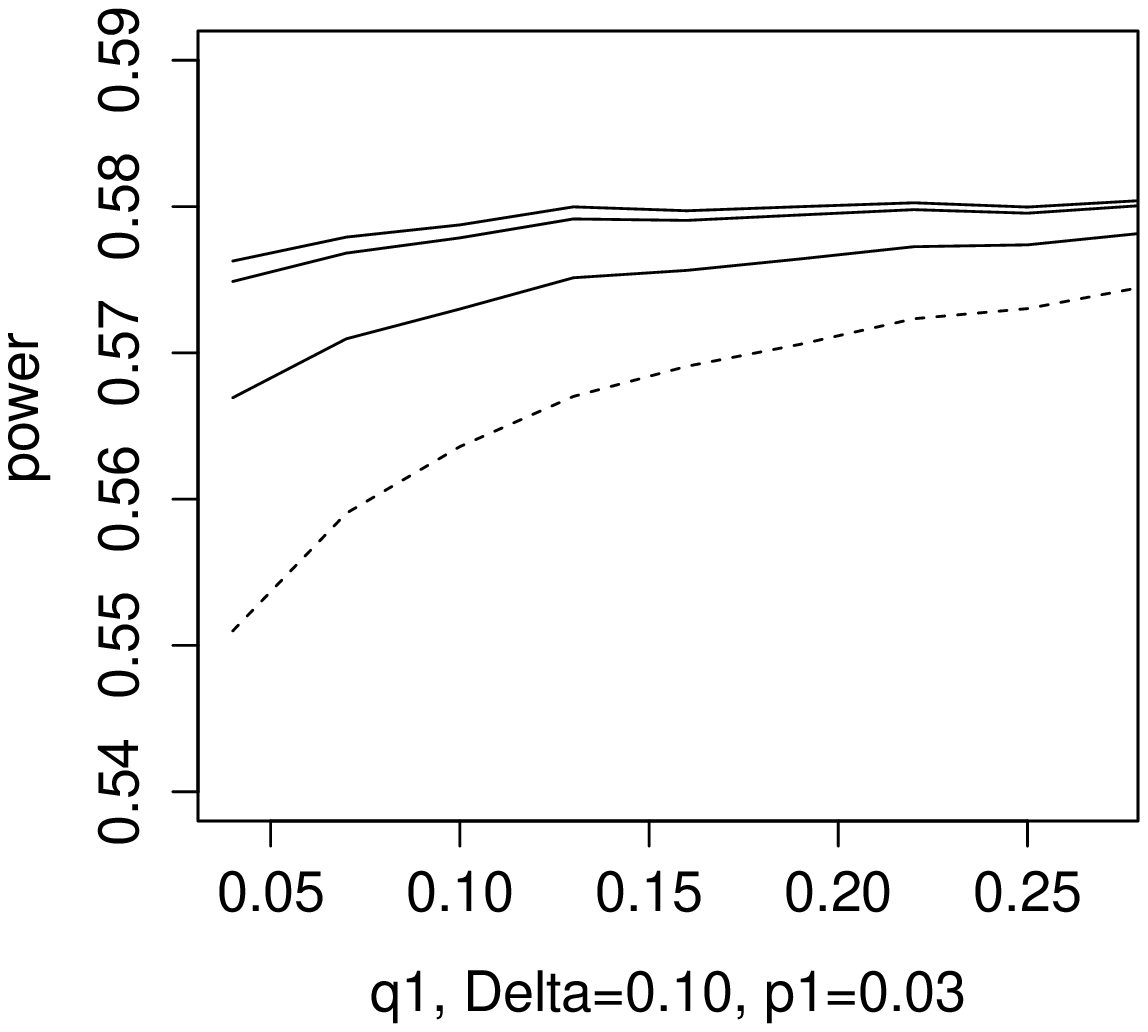}
\includegraphics[scale=0.55]{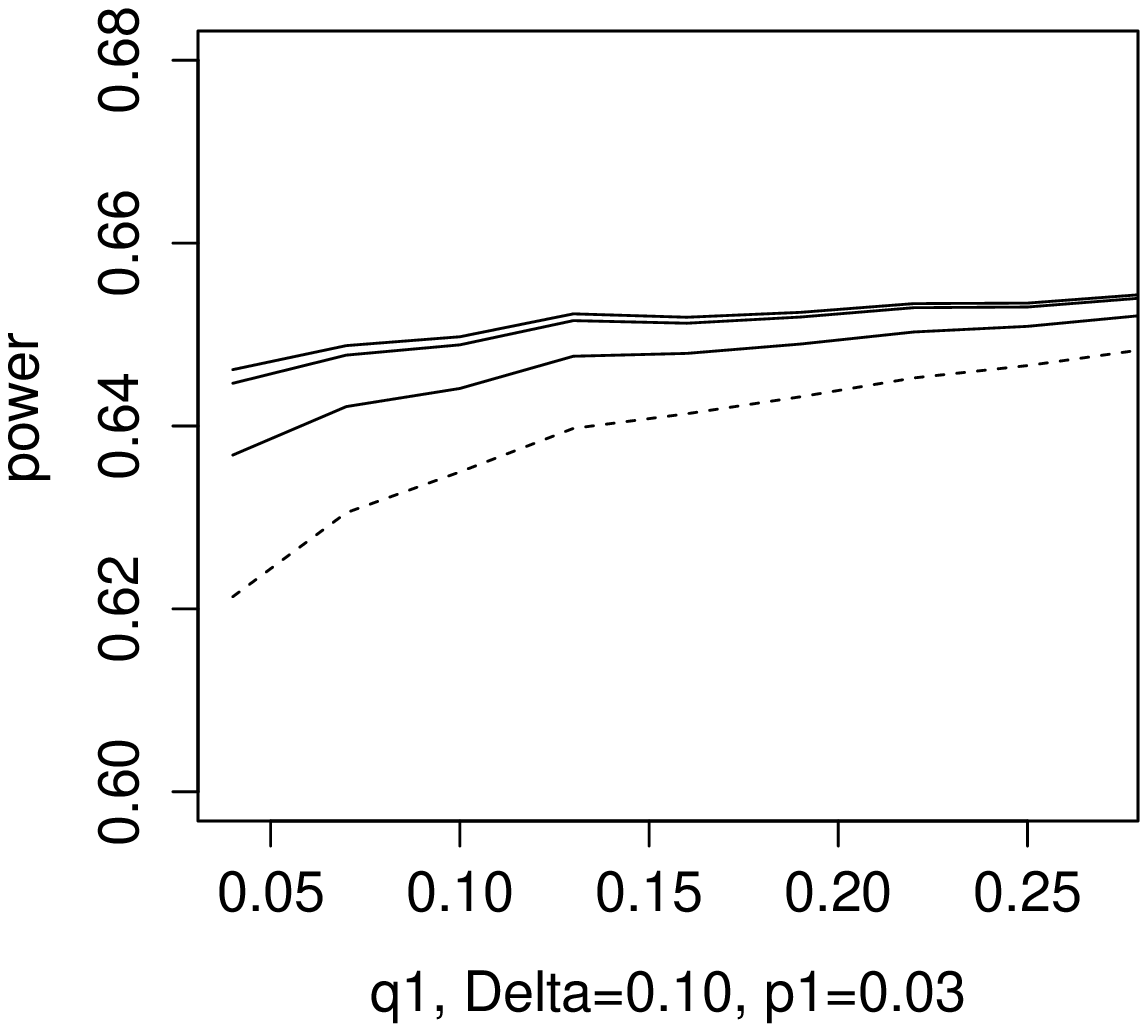}
\end{center}
\caption{Power functions for the test based on $\WHWD$ (continued lines, for $\delta=\pi,\delta=0.05,\delta=0.20$ ordered top to bottom) and $\Zcatt{}(1/2)$ (dashed lines), for $F$ equal to $0.1$ (left plot) and $0.2$ (right plot) as a function $q_1$.}
\label{fig.ResultsHWD}
\end{figure}

\end{document}